\newtheorem{theorem}{Theorem}
\newtheorem{remark}{Remark}
\newtheorem{corollary}{Corollary}
\DeclareMathOperator\C{\sf C}
\def \P{\operatorname{Pr}}
\begin{document}
\title{A New Achievable Scheme for Interference Relay Channels}
\author{\authorblockN{Byungjun Kang, Si-Hyeon Lee, Sae-Young Chung and Changho Suh\\}
\authorblockA{Department of EE, KAIST, Daejeon, Korea\\
Email: bj\_kang@kaist.ac.kr, sihyeon@kaist.ac.kr,
sychung@ee.kaist.ac.kr, chsuh@ee.kaist.ac.kr} } \maketitle

\begin{abstract}
We establish an achievable rate region for discrete memoryless interference relay channels that consist of two source-destination pairs and one or more relays. We develop an achievable scheme combining Han-Kobayashi and noisy network coding schemes. We apply our achievability to two cases. First, we characterize the capacity region of a class of
discrete memoryless interference relay channels. This class naturally generalizes the injective deterministic discrete memoryless interference channel by El Gamal and Costa and the deterministic discrete memoryless relay channel with orthogonal receiver components by Kim. Moreover, for the Gaussian interference relay channel with orthogonal receiver components, we show that our scheme achieves a better sum rate than that of noisy network coding.
\end{abstract}

\IEEEpeerreviewmaketitle
\allowdisplaybreaks
\section{Introduction}
Discrete memoryless interference channel (DM-IC) was introduced by Ahlswede \cite{Ahlswede:74}. Discrete memoryless relay channel (DM-RC) was first studied by van der Meulen \cite{Meulen:71}. Neither the capacity region of the DM-IC nor the capacity of the DM-RC has been characterized yet except for some special cases. First, for DM-IC, the best known inner bound was obtained by Han and Kobayashi \cite{HanKobayashi:81}. This inner bound was shown to be tight for the injective deterministic DM-IC by El Gamal and Costa \cite{GamalCosta:82}. On the other hand, one relaying strategy for DM-RC is compress-and-forward (CF) due to Cover and El Gamal \cite{Gamal:79} where the relay compresses its observation and forwards it to the destination. CF was shown to be optimal for the deterministic DM-RC with orthogonal receiver components \cite{Kim:08} and the modulo-2 sum relay channel \cite{Yu:09}. Recently, noisy network coding \cite{Lim:10} generalized CF for general discrete memoryless relay networks.

A natural next step is to extend these results to more general channel scenarios in which there are more than two transmitter-receiver pairs and/or relays. As one model, we consider a discrete memoryless interference multi-relay channel (DM-IMRC) that consists of two source-destination pairs and an arbitrary number of relays. For this channel, we combine Han-Kobayashi and noisy network coding schemes to establish an achievable rate region. We apply our result to two cases. First, we characterize the capacity region of a class of a discrete memoryless interference relay channel (DM-IRC) which naturally generalizes the injective deterministic DM-IC by El Gamal and Costa \cite{GamalCosta:82} and the deterministic DM-RC with orthogonal receiver components by Kim \cite{Kim:08}. For the converse, a genie-aided proof technique is used. Furthermore, for a Gaussian interference relay channel (GIRC) with orthogonal receiver components, we show we can obtain a better sum rate than that in \cite{Lim:10}.


The rest of the paper is organized as follows. In Section \ref{sec:model}, we introduce the DM-IMRC model. Section \ref{sec:result} presents the achievable rate region for the DM-IMRC. Section \ref{sec:injec} characterizes the capacity region of a class of DM-IRC. Section \ref{sec:gap} focuses on the GIRC with orthogonal receiver components.\\

\section{Model} \label{sec:model}
We consider a DM-IMRC as depicted in Fig. \ref{fig:DMIRC}.
A $\left(2^{nR_1}, 2^{nR_2}, n\right)$ code consists of two message sets $\mathcal{M}_1=\{1,\ldots,2^{nR_1}\}$ and  $\mathcal{M}_2=\{1,\ldots,2^{nR_2}\}$, two encoding functions at the sources where the first source (node 1) maps its message $m_1\in\mathcal{M}_1$ to a codeword $x_1^n(m_1)\in \mathcal{X}_1^n$ and the second source (node 2) maps its message $m_2\in\mathcal{M}_2$ to a codeword $x_2^n(m_2)\in \mathcal{X}_2^n$, $N$ processing functions at the relays (node 3,$k$ where $k\in [1:N]$) that map each past received symbols $y_{3,k}^{i-1}\in \mathcal{Y}_{3,k}^{i-1}$ to a symbol $x_{3,k,i}(y_{3,k}^{i-1})\in \mathcal{X}_{3,k}$, and two decoding functions at the destinations where the first destination (node 4) maps each received sequence $y_4^n\in\mathcal{Y}_4^n$ to a message estimate $\hat{m}_1$ and the second destination (node 5) maps each received sequence $y_5^n\in\mathcal{Y}_5^n$ to a message estimate $\hat{m}_2$. The first source (node 1) chooses an index $m_1$ uniformly from the set $\mathcal{M}_1$ and sends $x_1^n(m_1)$ and the second source (node 2) chooses an index $m_2$ uniformly from the set $\mathcal{M}_2$ and sends $x_2^n(m_2)$. The average probability of error for a $(2^{nR_1}, 2^{nR_2}, n)$ code is given as $P_e^{(n)}\triangleq  \P\left((\hat{M}_1,\hat{M}_2) \neq ( M_1,M_2) \right)$. A rate pair $(R_1, R_2)$ is said to be \emph{achievable} if there exists a sequence of $(2^{nR_1}, 2^{nR_2}, n)$ codes such that $P_e^{(n)}\rightarrow 0$ as $n\rightarrow \infty$. The capacity region $C$ is the closure of the set of achievable rate pairs $(R_1, R_2)$.\\
\begin{figure}[t]
 \centering
  { \includegraphics[width=100mm]{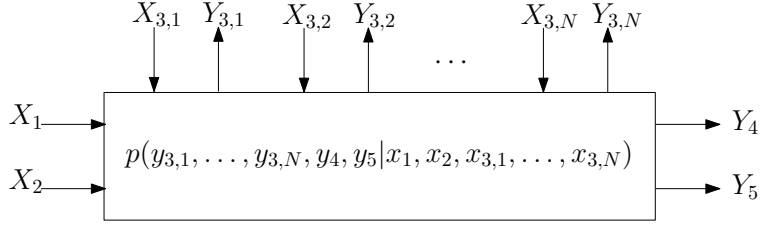}}
  \caption{A discrete memoryless interference multi-relay channel (DM-IMRC)} \label{fig:DMIRC}
\end{figure}

\section{Main results} \label{sec:result}
An achievable rate region for the DM-IMRC is established in the following theorem.
\begin{theorem}\label{thm:mib}
A rate pair $(R_1, R_2)$ is achievable for the DM-IMRC if there exists some probability mass function (pmf) $p(q)p(u_1,x_1|q)p(u_2,x_2|q)\prod_{k=1}^N p(x_{3,k}|q)p(\hat{y}_{3,k}|y_{3,k},x_{3,k},q)$ such that
\begin{align*}
R_1<&\min_{S}\{I(X_1,X_3(S);\hat{Y}_3(S^c),Y_4|U_2,X_3(S^c))-I(\hat{Y}_3(S);Y_3(S)|X_1,U_2,X_3^N,\hat{Y}_3(S^c),Y_4)\}\\
R_2<&\min_{S}\{I(X_2,X_3(S);\hat{Y}_3(S^c),Y_5|U_1,X_3(S^c))-I(\hat{Y}_3(S);Y_3(S)|X_2,U_1,X_3^N,\hat{Y}_3(S^c),Y_5)\}\\
R_1+R_2<&\min_{S}\{I(X_1,X_3(S);\hat{Y}_3(S^c),Y_4|U_1,U_2,X_3(S^c))-I(\hat{Y}_3(S);Y_3(S)|X_1,U_2,X_3^N,\hat{Y}_3(S^c),Y_4)\}\\
+&\min_{S}\{I(X_2,U_1,X_3(S);\hat{Y}_3(S^c),Y_5|X_3(S^c))-I(\hat{Y}_3(S);Y_3(S)|X_2,U_1,X_3^N,\hat{Y}_3(S^c),Y_5)\}\\
R_1+R_2<&\min_{S}\{I(X_2,X_3(S);\hat{Y}_3(S^c),Y_5|U_1,U_2,X_3(S^c))-I(\hat{Y}_3(S);Y_3(S)|X_2,U_1,X_3^N,\hat{Y}_3(S^c),Y_5)\}\\
+&\min_{S}\{I(X_1,U_2,X_3(S);\hat{Y}_3(S^c),Y_4|X_3(S^c))-I(\hat{Y}_3(S);Y_3(S)|X_1,U_2,X_3^N,\hat{Y}_3(S^c),Y_4)\}\\
R_1+R_2<&\min_{S}\{I(X_1,U_2,X_3(S);\hat{Y}_3(S^c),Y_4|U_1,X_3(S^c))-I(\hat{Y}_3(S);Y_3(S)|X_1,U_2,X_3^N,\hat{Y}_3(S^c),Y_4)\}\\
+&\min_{S}\{I(X_2,U_1,X_3(S);\hat{Y}_3(S^c),Y_5|U_2,X_3(S^c))-I(\hat{Y}_3(S);Y_3(S)|X_2,U_1,X_3^N,\hat{Y}_3(S^c),Y_5)\}\\
2R_1+R_2<&\min_{S}\{I(X_1,X_3(S);\hat{Y}_3(S^c),Y_4|U_1,U_2,X_3(S^c))-I(\hat{Y}_3(S);Y_3(S)|X_1,U_2,X_3^N,\hat{Y}_3(S^c),Y_4)\}\\
+&\min_{S}\{I(X_1,U_2,X_3(S);\hat{Y}_3(S^c),Y_4|X_3(S^c))-I(\hat{Y}_3(S);Y_3(S)|X_1,U_2,X_3^N,\hat{Y}_3(S^c),Y_4)\}\\
+&\min_{S}\{I(X_2,U_1,X_3(S);\hat{Y}_3(S^c),Y_5|U_2,X_3(S^c))-I(\hat{Y}_3(S);Y_3(S)|X_2,U_1,X_3^N,\hat{Y}_3(S^c),Y_5)\}\\
R_1+2R_2<&\min_{S}\{I(X_2,X_3(S);\hat{Y}_3(S^c),Y_5|U_1,U_2,X_3(S^c))-I(\hat{Y}_3(S);Y_3(S)|X_2,U_1,X_3^N,\hat{Y}_3(S^c),Y_5)\}\\
+&\min_{S}\{I(X_2,U_1,X_3(S);\hat{Y}_3(S^c),Y_5|X_3(S^c))-I(\hat{Y}_3(S);Y_3(S)|X_2,U_1,X_3^N,\hat{Y}_3(S^c),Y_5)\}\\
+&\min_{S}\{I(X_1,U_2,X_3(S);\hat{Y}_3(S^c),Y_4|U_1,X_3(S^c))-I(\hat{Y}_3(S);Y_3(S)|X_1,U_2,X_3^N,\hat{Y}_3(S^c),Y_4)\}
\end{align*}
for all subsets $S\subset[1:N]$ such that $X_3(S)\subset\{X_{3,1},\cdots,X_{3,N}\}$ which are relay nodes.
\end{theorem}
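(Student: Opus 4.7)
The plan is to prove Theorem~\ref{thm:mib} by combining the Han--Kobayashi (HK) rate-splitting scheme with the noisy network coding (NNC) scheme of Lim \emph{et al.}\ Each source splits its message into a common part (decoded at both destinations) and a private part (decoded only at its own destination), represented by the superposition auxiliary pair $(U_j,X_j)$. The $N$ relays perform compress-and-forward without Wyner--Ziv binning, and each destination performs simultaneous joint-typicality decoding across all $B$ transmission blocks while treating the unintended private stream as noise. The time-sharing variable $Q$ is introduced by the standard coded time-sharing argument.

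For the codebook construction, fix a joint pmf of the stated product form and a rate split $R_j=R_{j,c}+R_{j,p}$ for $j\in\{1,2\}$. For each block $b\in[1:B]$, generate independently $2^{nR_{j,c}}$ cloud centers $U_j^n(l_j)$ conditionally i.i.d.\ given $Q^n$, and for each $l_j$, $2^{nR_{j,p}}$ satellite codewords $X_j^n(l_j,t_j)$ conditionally i.i.d.\ given $(Q^n,U_j^n(l_j))$. At relay $k$, generate $2^{n\hat R_k}$ codewords $X_{3,k}^n(m_{3,k})$ and, for each, $2^{nR_k'}$ compression codewords $\hat Y_{3,k}^n(m_{3,k}',m_{3,k})$ conditionally i.i.d., with $R_k'>I(\hat Y_{3,k};Y_{3,k}\mid X_{3,k},Q)$ so that the covering lemma guarantees a suitable compression index. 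In block $b$ the sources send fresh HK codewords, and each relay transmits the $X_{3,k}^n$ whose index compresses its observation from block $b-1$.

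At the end of block $B+1$, destination~1 searches for a unique $(l_1^B,t_1^B)$ such that, for \emph{some} $l_2^B$ and some relay compression indices, the tuple
\begin{equation*}
\bigl(Q,\,U_1,\,X_1,\,U_2,\,X_3^N,\,\hat Y_3^N,\,Y_4\bigr)
\end{equation*}
is jointly typical in every block; destination~2 operates symmetrically on $Y_5$. Partitioning the error event according to (i) which of $(l_1,t_1,l_2)$ is wrong in at least one block and (ii) which subset $S\subseteq[1:N]$ of relay compression indices is decoded incorrectly, then applying the joint typicality lemma conditioned on the correctly decoded $X_3(S^c)$ and the multivariate packing lemma across $B$ blocks, yields for each error pattern a constraint of the form ``HK decoding gain minus NNC compression penalty,'' the penalty being $I(\hat Y_3(S);Y_3(S)\mid X_1,U_2,X_3^N,\hat Y_3(S^c),Y_4)$ at destination~1 and its mirror at destination~2. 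Taking the worst-case $S$ per error pattern, pairing destination~1 and destination~2 events for the sum-rate and mixed bounds, letting $B\to\infty$, and eliminating $R_{j,c},R_{j,p}$ from $R_j=R_{j,c}+R_{j,p}$ by Fourier--Motzkin yields the listed inequalities.

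The main obstacle is the combinatorial bookkeeping. The NNC compression penalty at destination~1 always conditions on the \emph{unintended} common auxiliary $U_2$ and on the intended source codeword $X_1$, whereas the decoding-gain term conditions on $X_3(S^c)$ together with whichever of $U_1,U_2$ is \emph{not} in error for that event. Keeping these two conditioning patterns consistent across every HK error case at each destination, and across their pairings for the joint bounds, is the delicate step. Once the pattern is laid out, the $\min_S$ arises mechanically from bounding each conditional mutual information by its worst case over $S$, and Fourier--Motzkin collapses the pattern-indexed system to the eight inequalities in the statement.
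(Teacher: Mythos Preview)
Your high-level plan matches the paper's approach: Han--Kobayashi rate splitting combined with noisy network coding, simultaneous nonunique decoding across all blocks, error events partitioned by which of the HK indices and which subset $S$ of relay compression indices are wrong, and Fourier--Motzkin at the end.

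The one real gap is the message-repetition structure. The paper uses \emph{long-message} NNC: a \emph{single} message pair $(m_{a0},m_{aa})\in[1{:}2^{nbR_{a0}}]\times[1{:}2^{nbR_{aa}}]$ is encoded over all $b$ blocks, with independently generated codebooks $u_{aj}(m_{a0})$, $x_{aj}(m_{a0},m_{aa})$ in each block $j$. This is precisely what makes the ``packing across $B$ blocks'' argument work: when $m_{11}\neq 1$, the candidate $X_{1j}^n(1,m_{11})$ is independent of the true transmitted sequence in \emph{every} block $j$, so the joint-typicality probability factors as a product over $j$, and after summing over relay indices the exponent picks up $\min_S\{A(S)-\sum_{k\in S}\hat R_{3,k}\}$. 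Your construction instead generates only $2^{nR_{j,c}}$ cloud centers per block, sends ``fresh HK codewords'' each block, and has the decoder search over a vector $(l_1^B,t_1^B)$. In that short-message setting a wrong candidate vector can agree with the truth in many blocks and differ in only a few, so the product over all $B$ blocks does not apply to each error term; the analysis you sketch would under-count the surviving error events. Either switch to the long-message encoding (codebooks of size $2^{nbR_{j,c}}$ per block carrying one message over all blocks), after which your product/packing argument goes through verbatim as in the paper, or be prepared for a substantially more delicate block-by-block case analysis.

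A secondary point: in NNC the relay transmit codebook and the compression codebook share a single rate $\hat R_{3,k}$, since $X_{3,k,j}^n(l_{k,j-1})$ carries exactly the previous block's compression index. Your separate rates $\hat R_k$ and $R_k'$ should therefore coincide.
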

\begin{proof}
See Appendix \ref{app:mib}.\\
\end{proof}

By letting $N=1$ in Theorem \ref{thm:mib}, we obtain the following achievable rate region for the DM-IRC.
\begin{corollary}\label{cor:ib}
A rate pair $(R_1, R_2)$ is achievable for the DM-IRC if
\begin{align*}
R_1<&\min\{I(X_1;\hat{Y}_3,Y_4|U_2,X_3,Q)), I(X_1,X_3;\!Y_4|U_2,Q)\!-\!I(\hat{Y}_3;\!Y_3|X_1,U_2,X_3,Y_4,Q)\}\\
R_2<&\min\{I(X_2;\hat{Y}_3,Y_5|U_1,X_3,Q), I(X_2,X_3;\!Y_5|U_1,Q)\!-\!I(\hat{Y}_3;\!Y_3|X_2,U_1,X_3,Y_5,Q)\}\\
R_1\!+\!R_2<&\min\{I(X_1;\hat{Y}_3,Y_4|U_1,U_2,X_3,Q), I(X_1,X_3;Y_4|U_1,U_2,Q)-I(\hat{Y}_3;Y_3|X_1,U_2,X_3,Y_4,Q)\}\\
+&\min\{I(X_2,U_1;\hat{Y}_3,Y_5|X_3,Q), I(X_2,U_1,X_3;Y_5|Q)-I(\hat{Y}_3;Y_3|X_2,U_1,X_3,Y_5,Q)\}\\
R_1\!+\!R_2<&\min\{I(X_2;\hat{Y}_3,Y_5|U_1,U_2,X_3,Q), I(X_2,X_3;Y_5|U_1,U_2,Q)-I(\hat{Y}_3;Y_3|X_2,U_1,X_3,Y_5,Q)\}\\
+&\min\{I(X_1,U_2;\hat{Y}_3,Y_4|X_3,Q), I(X_1,U_2,X_3;Y_4|Q)-I(\hat{Y}_3;Y_3|X_1,U_2,X_3,Y_4,Q)\}\\
R_1\!+\!R_2<&\min\{I(X_1,U_2;\hat{Y}_3,Y_4|U_1,X_3,Q), I(X_1,\!U_2,\!X_3;\!Y_4|U_1,Q)-I(\hat{Y}_3;Y_3|X_1,U_2,X_3,Y_4,Q)\}\\
+&\min\{I(X_2,U_1;\hat{Y}_3,Y_5|U_2,X_3,Q), I(X_2,\!U_1,\!X_3;\!Y_5|U_2,Q)-I(\hat{Y}_3;Y_3|X_2,U_1,X_3,Y_5,Q)\}\\
2R_1\!+\!R_2<&\min\{I(X_1;\hat{Y}_3,Y_4|U_1,U_2,X_3,Q), I(X_1,X_3;Y_4|U_1,U_2,Q)-I(\hat{Y}_3;Y_3|X_1,U_2,X_3,Y_4,Q)\}\\
+&\min\{I(X_1,U_2;\hat{Y}_3,Y_4|X_3,Q), I(X_1,U_2,X_3;Y_4|Q)-I(\hat{Y}_3;Y_3|X_1,U_2,X_3,Y_4,Q)\}\\
+&\min\{I(X_2,U_1;\hat{Y}_3,Y_5|U_2,X_3,Q), I(X_2,\!U_1,\!X_3;\!Y_5|U_2,Q)-I(\hat{Y}_3;Y_3|X_2,U_1,X_3,Y_5,Q)\}\\
R_1\!+\!2R_2<&\min\{I(X_2;\hat{Y}_3,Y_5|U_1,U_2,X_3,Q), I(X_2,X_3;Y_5|U_1,U_2,Q)-I(\hat{Y}_3;Y_3|X_2,U_1,X_3,Y_5,Q)\}\\
+&\min\{I(X_2,U_1;\hat{Y}_3,Y_5|X_3,Q), I(X_2,U_1,X_3;Y_5|Q)-I(\hat{Y}_3;Y_3|X_2,U_1,X_3,Y_5,Q)\}\\
+&\min\{I(X_1,U_2;\hat{Y}_3,Y_4|U_1,X_3,Q), I(X_1,\!U_2,\!X_3;\!Y_4|U_1,Q)-I(\hat{Y}_3;Y_3|X_1,U_2,X_3,Y_4,Q)\}
\end{align*}
for some pmf $p(q)p(u_1,x_1|q)p(u_2,x_2|q)p(x_3|q)p(\hat{y}_3|y_3,x_3,q)$.\\
\end{corollary}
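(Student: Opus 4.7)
The plan is to specialize Theorem \ref{thm:mib} to the single-relay case $N=1$, where the only subsets $S\subset[1:1]$ are $S=\emptyset$ and $S=\{1\}$. Under $N=1$, the relay tuple collapses to $X_3^N=X_3$, and each $\min_S\{\cdot\}$ in Theorem \ref{thm:mib} becomes a minimum over just these two choices. The first step would therefore be to evaluate each such $\min_S$ expression at both values of $S$ and rewrite the resulting two-term minimum in the notation of Corollary \ref{cor:ib}.

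For $S=\emptyset$, I would use that $X_3(S)=\emptyset$, $X_3(S^c)=X_3$, $Y_3(S)=\emptyset$, $\hat Y_3(S^c)=\hat Y_3$; consequently the compression-penalty mutual information $I(\hat Y_3(S);Y_3(S)\mid\cdots)$ is identically zero, and a representative leading term such as $I(X_1,X_3(S);\hat Y_3(S^c),Y_4\mid U_2,X_3(S^c))$ becomes $I(X_1;\hat Y_3,Y_4\mid U_2,X_3)$, which is exactly the first entry of the corresponding $\min\{\cdot,\cdot\}$ appearing in Corollary \ref{cor:ib}. For $S=\{1\}$ the roles swap: $X_3(S)=X_3$, $X_3(S^c)=\emptyset$, $Y_3(S)=Y_3$, $\hat Y_3(S^c)=\emptyset$, so the same leading term collapses to $I(X_1,X_3;Y_4\mid U_2)$ and the penalty becomes $I(\hat Y_3;Y_3\mid X_1,U_2,X_3,Y_4)$, matching the second entry of the same $\min$. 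The time-sharing variable $Q$ is carried throughout as an extra conditioning argument, consistent with the $p(q)$ factor in the joint pmf of Theorem \ref{thm:mib}.

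Having done this for one representative $\min_S$, the remainder of the proof is a uniform bookkeeping task: the nine distinct $\min_S$ expressions appearing across the $R_1$, $R_2$, three $R_1+R_2$, $2R_1+R_2$, and $R_1+2R_2$ bounds of Theorem \ref{thm:mib} differ from each other only in which subset of $\{U_1,U_2\}$ appears in the conditioning of the leading mutual-information term, so the substitution that worked above applies verbatim to each of them. I would sweep through all seven rate constraints in sequence and check that the resulting two-argument minima match the corresponding expressions in Corollary \ref{cor:ib} term by term.

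The step I expect to be the main annoyance, rather than a genuine difficulty, is keeping the conditioning on $X_3$ consistent in the compression-penalty term $I(\hat Y_3;Y_3\mid X_1,U_2,X_3,Y_4,Q)$: in Theorem \ref{thm:mib} this term carries $X_3^N$ in the conditioning, which for $N=1$ is simply $X_3$ regardless of $S$, and it attaches only to the $S=\{1\}$ argument of each minimum (since it vanishes when $S=\emptyset$). Once this placement is verified for both the $Y_4$-side and the $Y_5$-side bounds, the reduction is complete and nothing further is required.
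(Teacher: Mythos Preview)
Your proposal is correct and matches the paper's approach exactly: the paper derives Corollary~\ref{cor:ib} simply by stating ``By letting $N=1$ in Theorem~\ref{thm:mib}'' with no further argument, so the two-case evaluation over $S=\emptyset$ and $S=\{1\}$ that you outline is precisely the intended specialization. Your identification of the $S=\emptyset$ and $S=\{1\}$ reductions, the vanishing of the penalty term when $S=\emptyset$, and the handling of $Q$ as an explicit conditioning variable are all accurate.
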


\section{Capacity of a class of injective interference relay channels} \label{sec:injec}
We characterize the capacity region of a class of injective DM-IRCs. In this class, the channel outputs are given as follows:
\begin{align*}
Y_4&=(Y_4',Y_4'')\\
Y_5&=(Y_5',Y_5'')\\
Y_4'&=y_4(X_1,T_2)\\
Y_5'&=y_5(X_2,T_1)\\
Y_4''&=Y_5''=X_3\\
Y_3&=f_1(X_1,Y_4')=f_2(X_2,Y_5')
\end{align*}
where $T_1=t_1(X_1)$ and $T_2=t_2(X_2)$ are functions of $X_1$ and $X_2$, respectively and $f_1$ and $f_2$ are functions of $(X_1,Y_4')$ and $(X_2,Y_5')$, respectively. The functions $y_4$ and $y_5$ are injective in $t_1$ and $t_2$, respectively, i.e., for every $x_1\in \mathcal{X}_1$, $y_4(x_1,t_2)$ is a one-to-one function of $t_2$ and similarly for $y_5$. The relay sends information over a common rate-limited noiseless link of rate $R_0\triangleq\max_{p(x_3)} I(X_3;Y_4^{\prime\prime})=\max_{p(x_3)} I(X_3;Y_5^{\prime\prime})$ to both destinations. This class of DM-IRCs is illustrated in Fig. \ref{fig:DMIRC_cap}.
\begin{figure}[t]
 \centering
  { \includegraphics[width=75mm]{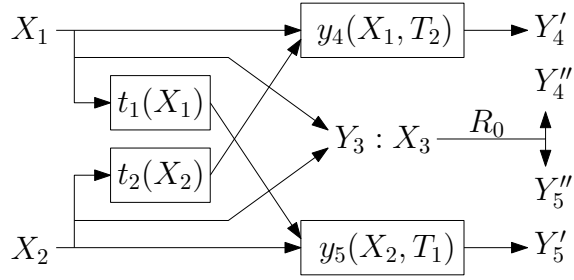}}
  \caption{A class of injective DM-IRCs} \label{fig:DMIRC_cap}
\end{figure}
For the class of injective DM-IRCs illustrated in Fig. \ref{fig:DMIRC_cap}, the following theorem gives the capacity region.
\begin{theorem}\label{thm:capacity}
The capacity region of the class of injective DM-IRCs in Fig. \ref{fig:DMIRC_cap} is the set of rate pairs $(R_1, R_2)$ such that
\begin{align}
R_1\leq&\min\{H(Y_4^{\prime}|T_2,Q)+H(Y_3|Y_4^{\prime},T_2,Q),H(Y_4^{\prime}|T_2,Q)+R_0\}\\
R_2\leq&\min\{H(Y_5^{\prime}|T_1,Q)+H(Y_3|Y_5^{\prime},T_1,Q),H(Y_5^{\prime}|T_1,Q)+R_0\}\\
R_1+R_2\leq&\min\{H(Y_3,Y_4^{\prime}|T_1,T_2,Q),H(Y_4^{\prime}|T_1,T_2,Q)+R_0\}+\min\{H(Y_3,Y_5^{\prime}|Q),H(Y_5^{\prime}|Q)+R_0\}\\
R_1+R_2\leq&\min\{H(Y_3,Y_5^{\prime}|T_1,T_2,Q),H(Y_5^{\prime}|T_1,T_2,Q)+R_0\}+\min\{H(Y_3,Y_4^{\prime}|Q),H(Y_4^{\prime}|Q)+R_0\}\\
R_1+R_2\leq&\min\{H(Y_3,Y_4^{\prime}|T_1,Q),H(Y_4^{\prime}|T_1,Q)+R_0\}+\min\{H(Y_3,Y_5^{\prime}|T_2,Q),H(Y_5^{\prime}|T_2,Q)+R_0\}\\
2R_1+R_2\leq&\min\{H(Y_3,Y_4^{\prime}|T_1,T_2,Q),H(Y_4^{\prime}|T_1,T_2,Q)+R_0\}+\min\{H(Y_3,Y_4^{\prime}|Q),H(Y_4^{\prime}|Q)+R_0\}\nonumber\\
&+\min\{H(Y_3,Y_5^{\prime}|T_2,Q),H(Y_5^{\prime}|T_2,Q)+R_0\}\\
R_1+2R_2\leq&\min\{H(Y_3,Y_5^{\prime}|T_1,T_2,Q),H(Y_5^{\prime}|T_1,T_2,Q)+R_0\}+\min\{H(Y_3,Y_5^{\prime}|Q),H(Y_5^{\prime}|Q)+R_0\}\nonumber\\
&+\min\{H(Y_3,Y_4^{\prime}|T_1,Q),H(Y_4^{\prime}|T_1,Q)+R_0\}
\end{align}
for some pmf $p(q)p(x_1|q)p(x_2|q)$.
\end{theorem}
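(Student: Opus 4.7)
The plan is to prove both inclusions: achievability by specializing Corollary~\ref{cor:ib}, and a converse by Fano's inequality combined with a genie-aided argument. For achievability I would instantiate Corollary~\ref{cor:ib} with $U_1=t_1(X_1)$, $U_2=t_2(X_2)$, $\hat Y_3=Y_3$, and $X_3$ independent of $(Q,X_1,X_2)$ with $p(x_3)$ chosen so that $H(X_3)=R_0$ (possible because $Y_4''=Y_5''=X_3$ makes the relay-to-destination links noiseless with capacity $R_0$). Three simplifications then drive everything: (i) since $\hat Y_3=Y_3$ and $Y_3=f_1(X_1,Y_4')$, every ``compression cost'' $I(\hat Y_3;Y_3|X_1,U_2,X_3,Y_4,Q)$ vanishes because conditioning on $X_1$ and $Y_4\supset Y_4'$ already determines $Y_3$; (ii) writing $Y_4=(Y_4',X_3)$ with $X_3$ independent, the $X_3$ part contributes an additive $H(X_3)=R_0$ wherever it appears; (iii) by injectivity of $y_4(x_1,\cdot)$ and of $y_5(x_2,\cdot)$, each remaining mutual information collapses into the entropy expressions of the theorem. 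Checking this for each of the eight rate constraints is routine but lengthy.

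For the converse, I would use Fano's inequality plus genie information. For $R_1$, hand $T_2^n$ to destination~$1$; since $M_1\perp T_2^n$,
\begin{align*}
nR_1 \leq I(M_1;Y_4^n|T_2^n) + n\epsilon_n = H(Y_4'^n|T_2^n)+H(X_3^n|Y_4'^n,T_2^n) + n\epsilon_n,
\end{align*}
where $H(Y_4^n|M_1,T_2^n)=0$ follows because $M_1$ determines $X_1^n$, hence $Y_4'^n=y_4(X_1^n,T_2^n)$, hence $Y_3^n=f_1(X_1^n,Y_4'^n)$, hence $X_3^n$ (a causal function of $Y_3^n$). The residual entropy is then bounded two ways: $H(X_3^n|Y_4'^n,T_2^n)\leq nR_0$ using $H(X_{3,i})\leq R_0$, and $H(X_3^n|Y_4'^n,T_2^n)\leq H(Y_3^n|Y_4'^n,T_2^n)$ using that $X_3^n$ is a function of $Y_3^n$. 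A standard time-sharing argument with $Q$ uniform over $[1:n]$ then yields the first bound.

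For the sum-rate bounds I would decompose further. For $R_1+R_2\leq(\cdot|T_1,T_2)+(\cdot)$,
\begin{align*}
I(M_1;Y_4^n|T_2^n) = H(T_1^n) + I(M_1;Y_4^n|T_1^n,T_2^n)
\end{align*}
(using that $T_1^n$ is a function of $M_1$), and separately
\begin{align*}
nR_2 \leq I(M_2;Y_5^n) \leq H(Y_5^n) - H(Y_5^n|X_2^n) = H(Y_5^n) - H(T_1^n),
\end{align*}
where the last step uses injectivity of $y_5(x_2,\cdot)$ in $t_1$ together with $T_1^n\perp X_2^n$. The $H(T_1^n)$ terms cancel, leaving the desired $H(Y_4'|T_1,T_2,Q)+H(Y_5'|Q)$ structure, and the $\min\{\cdot,\cdot\}$ on each side emerges from the same two-way bound on $H(X_3^n|\cdot)$ used before. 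The $2R_1+R_2$ and $R_1+2R_2$ bounds follow by summing an additional single-rate inequality with matching genies, and the remaining sum-rate bounds by symmetry. I expect the main obstacle to be bookkeeping: each of the seven inequalities requires selecting compatible genies so that the auxiliary $H(T_1^n)$ or $H(T_2^n)$ terms cancel between the two receivers, while simultaneously tracking the two-way bound on $H(X_3^n|\cdot)$ at \emph{each} destination to recover both halves of the corresponding $\min$.
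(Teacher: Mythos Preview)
Your plan is correct and matches the paper's proof in all essentials. Achievability is exactly the paper's specialization $U_i=T_i$, $\hat Y_3=Y_3$ in Corollary~\ref{cor:ib}; the converse is the same genie-aided argument with $T_1^n,T_2^n$ handed to the appropriate receiver so that the $H(T_1^n)$ and $H(T_2^n)$ terms telescope between the two Fano bounds.

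The one organizational difference is how the two halves of each $\min\{\cdot,\cdot\}$ arise. The paper proves them separately: for the first half it gives $Y_3^n$ as an extra genie (e.g.\ $I(M_1;Y_4^n,Y_3^n,T_1^n|T_2^n)$), for the second half it does not and instead splits off $I(X_3^n;Y_4''^n|\cdots)\le nR_0$. You instead write a single expression $H(Y_4'^n|\cdots)+H(X_3^n|Y_4'^n,\cdots)$ and branch on the two bounds $H(X_3^n|\cdot)\le H(Y_3^n|\cdot)$ (since $X_3^n$ is a causal function of $Y_3^n$) and $H(X_3^n|\cdot)\le nR_0$. These are equivalent; your packaging is arguably cleaner because it avoids writing twelve separate genie quantities $I_1,\ldots,I_{12}$.

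One small point to tighten: in your step $H(Y_5^n|X_2^n)=H(T_1^n)$ you invoke only injectivity of $y_5(x_2,\cdot)$, but $Y_5^n$ also contains $Y_5''^n=X_3^n$. You need the additional (easy) observation that $Y_3=f_2(X_2,y_5(X_2,T_1))$ is a function of $(X_2,T_1)$, so $X_3^n$ is determined by $(X_2^n,T_1^n)$ as well; then indeed $H(Y_5'^n,X_3^n|X_2^n)=H(T_1^n|X_2^n)=H(T_1^n)$. The analogous fact (via $Y_3=f_1(X_1,Y_4')$) is what makes your $H(Y_4^n|M_1,T_2^n)=0$ work, so you are already using it implicitly.
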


\begin{proof}
The achievability of Theorem \ref{thm:capacity} is directly obtained by letting $\hat{Y}_3=Y_3, U_1=T_1,$ and $U_2=T_2$ in Corollary \ref{cor:ib}. The converse proof is given in Appendix \ref{app:thm2}.\\
\end{proof}

\section{Gaussian interference relay channel with orthogonal receiver components} \label{sec:gap}
Consider the GIRC with orthogonal receiver components in Fig. \ref{fig:GIRC}. The channel outputs are
\begin{align*}
Y_3&=g_{31}X_1+g_{32}X_2+Z_3\\
Y_4^{\prime}&=g_{41}X_1+g_{42}X_2+Z_4\\
Y_5^{\prime}&=g_{51}X_1+g_{52}X_2+Z_5.
\end{align*}
where $Y_l=(Y_l^{\prime},Y_l^{\prime\prime}), Y_l^{\prime}$ and $Y_l^{\prime\prime}$ are independent for $l=4,5$, $g_{jk}$ is the channel gain from node $k$ to node $j$ and the noise $Z_i\sim \mathcal{N}(0,1)$ is independent and identically distributed (i.i.d.). Relay helps the communication of two source-destination pairs by forwarding some information about $Y_3$ to both destinations through a common rate-limited noiseless link of rate $R_0\triangleq\max_{p(x_3)} I(X_3;Y_4^{\prime\prime})=\max_{p(x_3)} I(X_3;Y_5^{\prime\prime})$.
\begin{figure}[t]
 \centering
  { \includegraphics[width=75mm]{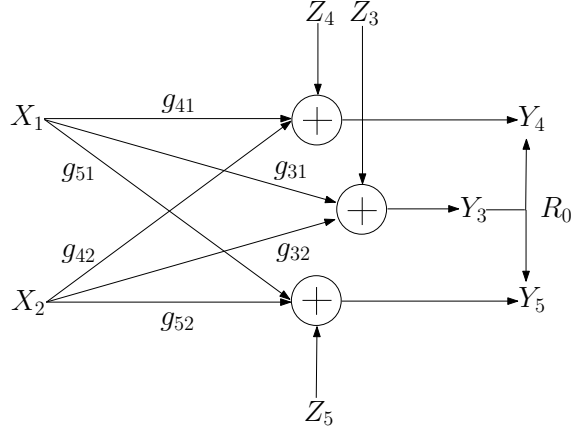}}
  \caption{GIRC with orthogonal receiver components} \label{fig:GIRC}
\end{figure}

We consider $P_1=P_2=P$, $X_1=U_1+V_1$, $X_2=U_2+V_2$ and $U_1$, $V_1$, $U_2$ and $V_2$ are independent where $U_i$ corresponds to the common message and $V_i$ corresponds to the private message for $i=1,2$ and power is allocated as $P_{U_i}=(1-\alpha_i)P, P_{V_i}=\alpha_iP$ for $i=1,2$. Then, setting $\hat{Y}_3=Y_3+\hat{Z}$ with $\hat{Z}\sim \mathcal{N}(0,\sigma^2)$ yields the inner bound $\mathcal{R}$ that consists of all rate pairs $(R_1,R_2)$ such that
\begin{align*}
R_1<&\C\left(\frac{b_{11}P+a_1^2\alpha_2P^2}{1+\sigma^2+b_{12}\alpha_2P}\right)\\
R_1<&\C\left(\frac{g_{41}^2P}{g_{42}^2\alpha_2P+1}\right)-C_1+R_0\\
R_2<&\C\left(\frac{b_{22}P+a_2^2\alpha_1P^2}{1+\sigma^2+b_{21}\alpha_1P}\right)\\
R_2<&\C\left(\frac{g_{52}^2P}{g_{51}^2\alpha_1P+1}\right)-C_2+R_0\\
R_1+R_2<&\C\left(\frac{b_{11}\alpha_1P+a_1^2\alpha_1\alpha_2P^2}{1+\sigma^2+b_{12}\alpha_2P}\right)+\C\left(\frac{b_{21}(1-\alpha_1)P+b_{22}P+a_2^2P^2}{1+\sigma^2+b_{21}\alpha_1P}\right)\\
R_1+R_2<&\C\left(\frac{g_{41}^2\alpha_1P}{g_{42}^2\alpha_2P+1}\right)-C_1+\C\left(\frac{g_{52}^2P+(1-\alpha_1)g_{51}^2P}{g_{51}^2\alpha_1P+1}\right)-C_2+2R_0\\
R_1+R_2<&\C\left(\frac{b_{11}\alpha_1P+a_1^2\alpha_1\alpha_2P^2}{1+\sigma^2+b_{12}\alpha_2P}\right)+\C\left(\frac{g_{52}^2P+(1-\alpha_1)g_{51}^2P}{g_{51}^2\alpha_1P+1}\right)-C_2+R_0\\
R_1+R_2<&\C\left(\frac{b_{21}(1-\alpha_1)P+b_{22}P+a_2^2P^2}{1+\sigma^2+b_{21}\alpha_1P}\right)+\C\left(\frac{g_{41}^2\alpha_1P}{g_{42}^2\alpha_2P+1}\right)-C_1+R_0\\
R_1+R_2<&\C\left(\frac{b_{22}\alpha_2P+a_2^2\alpha_1\alpha_2P^2}{1+\sigma^2+b_{21}\alpha_1P}\right)+\C\left(\frac{b_{12}(1-\alpha_2)P+b_{11}P+a_1^2P^2}{1+\sigma^2+b_{12}\alpha_2P}\right)\\
R_1+R_2<&\C\left(\frac{g_{52}^2\alpha_2P}{g_{51}^2\alpha_1P+1}\right)-C_1+\C\left(\frac{g_{41}^2P+(1-\alpha_2)g_{42}^2P}{g_{42}^2\alpha_2P+1}\right)-C_2+2R_0\\
R_1+R_2<&\C\left(\frac{b_{22}\alpha_2P+a_2^2\alpha_1\alpha_2P^2}{1+\sigma^2+b_{21}\alpha_1P}\right)+\C\left(\frac{g_{41}^2P+(1-\alpha_2)g_{42}^2P}{g_{42}^2\alpha_2P+1}\right)-C_1+R_0\\
R_1+R_2<&\C\left(\frac{b_{12}(1-\alpha_2)P+b_{11}P+a_1^2P^2}{1+\sigma^2+b_{12}\alpha_2P}\right)+\C\left(\frac{g_{52}^2\alpha_2P}{g_{51}^2\alpha_1P+1}\right)-C_2+R_0\\
R_1+R_2<&\C\left(\frac{b_{12}(1-\alpha_2)P+b_{11}\alpha_1P+a_1^2\alpha_1P^2}{1+\sigma^2+b_{12}\alpha_2P}\right)+\C\left(\frac{b_{21}(1-\alpha_1)P+b_{22}\alpha_2P+a_2^2\alpha_2P^2}{1+\sigma^2+b_{21}\alpha_1P}\right)\\
R_1+R_2<&\C\left(\frac{g_{41}^2\alpha_1P+(1-\alpha_2)g_{42}^2P}{g_{42}^2\alpha_2P+1}\right)-C_1+\C\left(\frac{g_{52}^2\alpha_2P+(1-\alpha_1)g_{51}^2P}{g_{51}^2\alpha_1P+1}\right)-C_2+2R_0\\
R_1+R_2<&\C\left(\frac{b_{12}(1-\alpha_2)P+b_{11}\alpha_1P+a_1^2\alpha_1P^2}{1+\sigma^2+b_{12}\alpha_2P}\right)+\C\left(\frac{g_{52}^2\alpha_2P+(1-\alpha_1)g_{51}^2P}{g_{51}^2\alpha_1P+1}\right)-C_2+R_0\\
R_1+R_2<&\C\left(\frac{b_{21}(1-\alpha_1)P+b_{22}\alpha_2P+a_2^2\alpha_2P^2}{1+\sigma^2+b_{21}\alpha_1P}\right)+\C\left(\frac{g_{41}^2\alpha_1P+(1-\alpha_2)g_{42}^2P}{g_{42}^2\alpha_2P+1}\right)-C_1+R_0\\
2R_1+R_2<&\C\left(\frac{b_{11}\alpha_1P+a_1^2\alpha_1\alpha_2P^2}{1+\sigma^2+b_{12}\alpha_2P}\right)+\C\left(\frac{b_{12}(1-\alpha_2)P+b_{11}P+a_1^2P^2}{1+\sigma^2+b_{12}\alpha_2P}\right)\\
&+\C\left(\frac{b_{21}(1-\alpha_1)P+b_{22}\alpha_2P+a_2^2\alpha_2P^2}{1+\sigma^2+b_{21}\alpha_1P}\right)\nonumber\\
2R_1+R_2<&\C\left(\frac{g_{41}^2\alpha_1P}{g_{42}^2\alpha_2P+1}\right)+\C\left(\frac{g_{41}^2P+(1-\alpha_2)g_{42}^2P}{g_{42}^2\alpha_2P+1}\right)+\C\left(\frac{g_{52}^2\alpha_2P+(1-\alpha_1)g_{51}^2P}{g_{51}^2\alpha_1P+1}\right)\\
&+3R_0-2C_1-C_2\nonumber\\
2R_1+R_2<&\C\left(\frac{g_{41}^2\alpha_1P}{g_{42}^2\alpha_2P+1}\right)+\C\left(\frac{g_{41}^2P+(1-\alpha_2)g_{42}^2P}{g_{42}^2\alpha_2P+1}\right)+2R_0-2C_1\\
&+\C\left(\frac{b_{21}(1-\alpha_1)P+b_{22}\alpha_2P+a_2^2\alpha_2P^2}{1+\sigma^2+b_{21}\alpha_1P}\right)\nonumber\\
2R_1+R_2<&\C\left(\frac{b_{11}\alpha_1P+a_1^2\alpha_1\alpha_2P^2}{1+\sigma^2+b_{12}\alpha_2P}\right)+\C\left(\frac{b_{12}(1-\alpha_2)P+b_{11}P+a_1^2P^2}{1+\sigma^2+b_{12}\alpha_2P}\right)\\
&+\C\left(\frac{g_{52}^2\alpha_2P+(1-\alpha_1)g_{51}^2P}{g_{51}^2\alpha_1P+1}\right)+R_0-C_2\nonumber\\
2R_1+R_2<&\C\left(\frac{b_{11}\alpha_1P+a_1^2\alpha_1\alpha_2P^2}{1+\sigma^2+b_{12}\alpha_2P}\right)+\C\left(\frac{g_{41}^2P+(1-\alpha_2)g_{42}^2P}{g_{42}^2\alpha_2P+1}\right)\\
&+\C\left(\frac{g_{52}^2\alpha_2P+(1-\alpha_1)g_{51}^2P}{g_{51}^2\alpha_1P+1}\right)+2R_0-C_1-C_2\nonumber\\
2R_1+R_2<&\C\left(\frac{b_{11}\alpha_1P+a_1^2\alpha_1\alpha_2P^2}{1+\sigma^2+b_{12}\alpha_2P}\right)+\C\left(\frac{b_{21}(1-\alpha_1)P+b_{22}\alpha_2P+a_2^2\alpha_2P^2}{1+\sigma^2+b_{21}\alpha_1P}\right)\\
&+\C\left(\frac{g_{41}^2P+(1-\alpha_2)g_{42}^2P}{g_{42}^2\alpha_2P+1}\right)+R_0-C_1\nonumber
\end{align*}
and $R_1+2R_2$ is bounded by (32)-(37) with indices 1 and 2 switched where $\C(x)\triangleq\frac{1}{2}\log(1+x)$, $a_1=g_{31}g_{42}-g_{32}g_{41}$, $a_2=g_{31}g_{52}-g_{32}g_{51}$, $b_{11}=g_{31}^2+(1+\sigma^2)g_{41}^2$, $b_{12}=g_{32}^2+(1+\sigma^2)g_{42}^2$, $b_{21}=g_{31}^2+(1+\sigma^2)g_{51}^2$, $b_{22}=g_{32}^2+(1+\sigma^2)g_{52}^2$, $C_1=\C\left(\frac{(g_{32}^2+g_{42}^2)\alpha_2P+1}{(g_{42}^2\alpha_2P+1)\sigma^2}\right)$, and $C_2=\C\left(\frac{(g_{31}^2+g_{51}^2)\alpha_1P+1}{(g_{51}^2\alpha_1P+1)\sigma^2}\right)$ for some $\sigma^2>0$.\\

\begin{remark}
Above inner bound is the same as that achieved by Han-Kobayashi and Generalized Hash-and-Forward schemes \cite{Yu:arxiv11} for a GIRC with a digital relay link of rate $R_0$ bits per channel use.\\
\end{remark}

\begin{remark}
The sum rates of the proposed scheme is compared with that of \cite{Lim:10} in Fig. \ref{fig:GIRC_sum}. The sum-rate curve for the noisy network coding from \cite{Lim:10} is obtained by using noisy network coding via simultaneous nonunique decoding and that via treating interference as noise. Proposed scheme outperforms these two schemes since Han-Kobayashi scheme is more general and includes as special cases both simultaneous nonunique decoding and treating interference as noise.
\begin{figure}[t]
 \centering
  { \includegraphics[width=75mm]{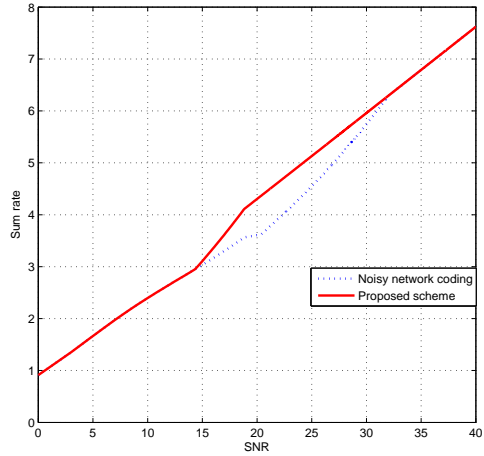}}
  \caption{Comparison of two schemes for the GIRC with orthogonal receiver components with $g_{41}=g_{52}=1, g_{42}=g_{51}=0.4, g_{31}=0.5, g_{32}=0.1, R_0=1, \sigma^2=5$.} \label{fig:GIRC_sum}
\end{figure}\\
\end{remark}

\appendix{
\subsection{Proof of Theorem \ref{thm:mib}}  \label{app:mib}

In this section, $\delta(\epsilon)>0$ denotes a function of $\epsilon>0$ which tends to zero as $\epsilon\rightarrow 0$ and $\epsilon_n> 0$ denotes a function of $n$ which tends to zero as $n\rightarrow \infty$. Moreover, we define $[i:j]\triangleq \{i,i+1,\ldots,j\}$ where $i$ and $j$ are integers. By using the coded time-sharing technique \cite{GamalKim:11}, achievability with $Q$ can be obtained. Hence it suffices to consider the case $Q=\emptyset$. Some of our proof steps and notations are based on those in Section 6.5 in \cite{GamalKim:11} and those in \cite{Lim:10}.

\subsubsection{Codebook generation} Fix a joint distribution of $p(u_1,x_1)p(u_2,x_2)\prod_{k=1}^N p(x_{3,k})p(\hat{y}_{3,k}|y_{3,k},x_{3,k})$. Let $b$ denote the number of blocks. For $a=1,2$, $m_a$ denotes the message for the $a$th source. For each message, we use rate splitting $m_a=(m_{a0},m_{aa})$ where $m_{a0}$ is the \emph{public} message at rate $R_{a0}$ and $m_{aa}$ is the \emph{private} message at rate $R_{aa}$ such that $R_a=R_{a0}+R_{aa}$. Moreover, for $k\in[1:N]$, $l_{kj}$ denotes the $k$th relay's compression index of its received signal in block $j$ at rate $\hat{R}_{3,k}$.

For each block $j\in [1:b]$ and each node $a=1,2$, randomly and independently generate $2^{nbR_{a0}}$ length-$n$ sequences $u_{aj}(m_{a0}), m_{a0}\in [1:2^{nbR_{a0}}],$ each according to the distribution $\prod_{i=1}^n p_{U_a}(u_{a,(j-1)n+i})$. For each $m_{a0}$, randomly and conditionally independently generate $2^{nbR_{aa}}$ sequences $x_{aj}(m_{a0},m_{aa}), m_{aa}\in [1:2^{nbR_{aa}}],$ each according to the distribution $\prod_{i=1}^n p_{X_a|U_a}(x_{a,(j-1)n+i}|u_{a,(j-1)n+i}(m_{a0}))$. Similarly, randomly and independently generate $2^{n\hat{R}_{3,k}}$ sequences $x_{3,kj}(l_{k,j-1}), l_{k,j-1}\in [1:2^{n\hat{R}_{3,k}}],$ each according to the distribution $\prod_{i=1}^n p_{X_{3,k}}(x_{3,k,(j-1)n+i})$. For each $x_{3,kj}(l_{k,j-1}), l_{k,j-1}\in [1:2^{n\hat{R}_{3,k}}],$ randomly and conditionally independently generate $2^{n\hat{R}_{3,k}}$ sequences $\hat{y}_{3,kj}(l_{kj}|l_{k,j-1}), l_{kj}\in [1:2^{n\hat{R}_{3,k}}]$, each according to $\prod_{i=1}^n p_{\hat{Y}_{3,k}|X_{3,k}}(\hat{y}_{3,k,(j-1)n+i}|x_{3,k,(j-1)n+i}(l_{k,j-1}))$. The codebook is defined as
\begin{align*}
\mathcal{C}_j=\{&u_{1j}(m_{10}),u_{2j}(m_{20}),x_{1j}(m_{10},m_{11}),x_{2j}(m_{20},m_{22}),x_{3,kj}(l_{k,j-1}),\hat{y}_{3,kj}(l_{kj}|l_{k,j-1}):\\
&m_{10}\in [1:2^{nbR_{10}}],m_{11}\in [1:2^{nbR_{11}}],m_{20}\in [1:2^{nbR_{20}}],m_{22}\in [1:2^{nbR_{22}}],\\
&l_{kj},l_{k,j-1}\in [1:2^{n\hat{R}_{3,k}}],k\in[1:N]\}
\end{align*}
for $j\in [1:b]$.

\subsubsection{Encoding} To send $m_a=(m_{a0},m_{aa})$, source $a$ transmits $x_{aj}^n(m_{a0},m_{aa})$ in block $j$. Set $l_{k0}=1, k\in[1:N]$ by convention. Upon receiving $y_{3,kj}^n$ at the end of block $j\in [1:b]$, the $k$th relay finds an index $l_{kj}$ such that
\begin{equation*}
(\hat{y}_{3,kj}^n(l_{kj}|l_{k,j-1}),y_{3,kj}^n,x_{3,kj}^n(l_{k,j-1}))\in \mathcal{T}_{\epsilon^{\prime}}^{(n)},
\end{equation*}
If there is more than one such index, select one of them uniformly at random. On the other hand, if there is no such index, choose an arbitrary index from $[1:2^{n\hat{R}_{3,k}}]$ uniformly at random. Then the $k$th relay transmits the codeword $x_{3,kj}^n(l_{k,j-1})$ in block $j\in [1:b]$.\\

\subsubsection{Decoding} Let $\epsilon>\epsilon^{\prime}$. We use simultaneous nonunique decoding. At the end of block $b$, node 4 finds the unique message pair $(\hat{m}_{10},\hat{m}_{11})$ such that there exist some $m_{20}\in[1:2^{nbR_{20}}]$ and $(\hat{l}_{1j},\cdots,\hat{l}_{Nj})$ satisfying
\begin{align*}
(&u_{1j}(\hat{m}_{10}),u_{2j}(m_{20}),x_{1j}(\hat{m}_{10},\hat{m}_{11}),x_{3,1j}(\hat{l}_{1,j-1}),\cdots,x_{3,Nj}(\hat{l}_{1,N,j-1}),\hat{y}_{3,1j}(\hat{l}_{1j}|\hat{l}_{1,j-1}),\cdots,\\
&\hat{y}_{3,Nj}(\hat{l}_{Nj}|\hat{l}_{N,j-1}),y_{4j})\in \mathcal{T}_{\epsilon}^{(n)}
\end{align*}
for all $j\in [1:b]$, where $\hat{m}_{10}\in [1:2^{nbR_{10}}], \hat{m}_{11}\in [1:2^{nbR_{11}}]$ and $\hat{l}_{kj}\in [1:2^{n\hat{R}_{3,k}}]$. Otherwise, it declares an error. Similarly, node 5 finds the message pair $(\hat{m}_{20},\hat{m}_{22})$.\\

\subsubsection{Analysis of the probability of error} Without loss of generality, assume that the message pair $((1,1),(1,1))$ and index $\mathbf{l}^{b}=(\mathbf{l}_{1},\ldots,\mathbf{l}_{b})=(\mathbf{1},\ldots,\mathbf{1})$ are sent where $\mathbf{l}_j=(l_{1j},\cdots,l_{Nj})$. Then the decoders make an error only if one of the following events occur:
\begin{align*}
\mathcal{E}_1=\{(&\hat{Y}_{3,kj}(l_{kj}|1),X_{3,kj}(1),Y_{3,kj})\notin \mathcal{T}_{\epsilon^{\prime}}^{(n)}\mbox{ for all }l_{kj}\mbox{ and for some }j\in [1:b],k\in[1:N]\}\\
\mathcal{E}_2=\{(&U_{1j}(1),U_{2j}(1),X_{1j}(1,1),X_{3,1j}(1),\cdots,X_{3,Nj}(1),\hat{Y}_{3,1j}(1|1),\cdots,\hat{Y}_{3,Nj}(1|1),Y_{4j})\notin \mathcal{T}_{\epsilon}^{(n)}\\
&\mbox{ for some }j\in [1:b]\}\\
\mathcal{E}_3=\{(&U_{1j}(1),U_{2j}(1),X_{2j}(1,1),X_{3,1j}(1),\cdots,X_{3,Nj}(1),\hat{Y}_{3,1j}(1|1),\cdots,\hat{Y}_{3,Nj}(1|1),Y_{5j})\notin \mathcal{T}_{\epsilon}^{(n)}\\
&\mbox{ for some }j\in [1:b]\}\\
\mathcal{E}_4=\{(&U_{1j}(1),U_{2j}(1),X_{1j}(1,m_{11}),X_{3,1j}(l_{1,j-1}),\cdots,X_{3,Nj}(l_{N,j-1}),\hat{Y}_{3,1j}(l_{1j}|l_{1,j-1}),\cdots,\\
&\hat{Y}_{3,Nj}(l_{Nj}|l_{N,j-1}),Y_{4j})\in \mathcal{T}_{\epsilon}^{(n)}\mbox{ for all }j\in [1:b]\mbox{ and for some }\mathbf{l}^b,m_{11}\neq 1\}\\
\mathcal{E}_5=\{(&U_{1j}(m_{10}),U_{2j}(1),X_{1j}(m_{10},m_{11}),X_{3,1j}(l_{1,j-1}),\cdots,X_{3,Nj}(l_{N,j-1}),\hat{Y}_{3,1j}(l_{1j}|l_{1,j-1}),\cdots,\\
&\hat{Y}_{3,Nj}(l_{Nj}|l_{N,j-1}),Y_{4j})\in \mathcal{T}_{\epsilon}^{(n)}\mbox{ for all }j\in [1:b]\mbox{ and for some }\mathbf{l}^b,m_{10}\neq 1,m_{11}\}\\
\mathcal{E}_6=\{(&U_{1j}(1),U_{2j}(m_{20}),X_{1j}(1,m_{11}),X_{3,1j}(l_{1,j-1}),\cdots,X_{3,Nj}(l_{N,j-1}),\hat{Y}_{3,1j}(l_{1j}|l_{1,j-1}),\cdots,\\
&\hat{Y}_{3,Nj}(l_{Nj}|l_{N,j-1}),Y_{4j})\in \mathcal{T}_{\epsilon}^{(n)}\mbox{ for all }j\in [1:b]\mbox{ and for some }\mathbf{l}^b,m_{20}\neq 1,m_{11}\neq 1\}\\
\mathcal{E}_7=\{(&U_{1j}(m_{10}),U_{2j}(m_{20}),X_{1j}(m_{10},m_{11}),X_{3,1j}(l_{1,j-1}),\cdots,X_{3,Nj}(l_{N,j-1}),\hat{Y}_{3,1j}(l_{1j}|l_{1,j-1}),\cdots,\\
&\hat{Y}_{3,Nj}(l_{Nj}|l_{N,j-1}),Y_{4j})\in \mathcal{T}_{\epsilon}^{(n)}\mbox{ for all }j\in [1:b]\mbox{ and for some }\mathbf{l}^b,m_{10}\neq 1,m_{20}\neq 1,m_{11}\}\\
\mathcal{E}_8=\{(&U_{1j}(1),U_{2j}(1),X_{2j}(1,m_{22}),X_{3,1j}(l_{1,j-1}),\cdots,X_{3,Nj}(l_{N,j-1}),\hat{Y}_{3,1j}(l_{1j}|l_{1,j-1}),\cdots,\\
&\hat{Y}_{3,Nj}(l_{Nj}|l_{N,j-1}),Y_{5j})\in \mathcal{T}_{\epsilon}^{(n)}\mbox{ for all }j\in [1:b]\mbox{ and for some }\mathbf{l}^b,m_{22}\neq 1\}\\
\mathcal{E}_9=\{(&U_{1j}(1),U_{2j}(m_{20}),X_{2j}(m_{20},m_{22}),X_{3,1j}(l_{1,j-1}),\cdots,X_{3,Nj}(l_{N,j-1}),\hat{Y}_{3,1j}(l_{1j}|l_{1,j-1}),\cdots,\\
&\hat{Y}_{3,Nj}(l_{Nj}|l_{N,j-1}),Y_{5j})\in \mathcal{T}_{\epsilon}^{(n)}\mbox{ for all }j\in [1:b]\mbox{ and for some }\mathbf{l}^b,m_{20}\neq 1,m_{22}\}\\
\mathcal{E}_{10}=\{(&U_{1j}(m_{10}),U_{2j}(1),X_{2j}(1,m_{22}),X_{3,1j}(l_{1,j-1}),\cdots,X_{3,Nj}(l_{N,j-1}),\hat{Y}_{3,1j}(l_{1j}|l_{1,j-1}),\cdots,\\
&\hat{Y}_{3,Nj}(l_{Nj}|l_{N,j-1}),Y_{5j})\in \mathcal{T}_{\epsilon}^{(n)}\mbox{ for all }j\in [1:b]\mbox{ and for some }\mathbf{l}^b,m_{10}\neq 1,m_{22}\neq 1\}\\
\mathcal{E}_{11}=\{(&U_{1j}(m_{10}),U_{2j}(m_{20}),X_{2j}(m_{20},m_{22}),X_{3,1j}(l_{1,j-1}),\cdots,X_{3,Nj}(l_{N,j-1}),\hat{Y}_{3,1j}(l_{1j}|l_{1,j-1}),\cdots,\\
&\hat{Y}_{3,Nj}(l_{Nj}|l_{N,j-1}),Y_{5j})\in \mathcal{T}_{\epsilon}^{(n)}\mbox{ for all }j\in [1:b]\mbox{ and for some }\mathbf{l}^b,m_{10}\neq 1,m_{20}\neq 1,m_{22}\}
\end{align*}

Thus, the probability of error is bounded as
\begin{align*}
\textsf{P}(\mathcal{E})\leq &\textsf{P}(\mathcal{E}_1)+\textsf{P}(\mathcal{E}_2\cap\mathcal{E}_1^c)+\textsf{P}(\mathcal{E}_3\cap\mathcal{E}_1^c)+\textsf{P}(\mathcal{E}_4)+\textsf{P}(\mathcal{E}_5)\\
&+\textsf{P}(\mathcal{E}_6)+\textsf{P}(\mathcal{E}_7)+\textsf{P}(\mathcal{E}_8)+\textsf{P}(\mathcal{E}_9)+\textsf{P}(\mathcal{E}_{10})+\textsf{P}(\mathcal{E}_{11}).
\end{align*}

By the covering lemma \cite{GamalKim:11} and the union of events bound over $b$ blocks, $\textsf{P}(\mathcal{E}_1)$ tends to zero as $n\rightarrow\infty$ if $\hat{R}_{3,k}>I(\hat{Y}_{3,k};Y_{3,k}|X_{3,k})+\delta(\epsilon^{\prime}), k\in[1:N]$. Next, using the Markov lemma \cite{GamalKim:11} and the union of events bound over $b$ blocks, the second and third terms $\textsf{P}(\mathcal{E}_2\cap\mathcal{E}_1^c), \textsf{P}(\mathcal{E}_3\cap\mathcal{E}_1^c)$ tend to zero as $n\rightarrow\infty$.

From here, we use similar proof techniques and steps as in \cite{Lim:10}. For the fourth term, by defining the events
\begin{align*}
\tilde{\mathcal{E}}_j(m_{11},\mathbf{l}_{j-1},\mathbf{l}_j)=\{(&U_{1j}(1),U_{2j}(1),X_{1j}(1,m_{11}),X_{3,1j}(l_{1,j-1}),\cdots,X_{3,Nj}(l_{N,j-1}),\\
&\hat{Y}_{3,1j}(l_{1j}|l_{1,j-1}),\cdots,\hat{Y}_{3,Nj}(l_{Nj}|l_{N,j-1}),Y_{4j})\in \mathcal{T}_{\epsilon}^{(n)}\},
\end{align*}

we can show
\begin{align*}
\textsf{P}(\mathcal{E}_4)&=\textsf{P}(\bigcup_{m_{11}\neq 1}\bigcup_{\mathbf{l}^b}\bigcap_{j=1}^b\tilde{\mathcal{E}}_j(m_{11},\mathbf{l}_{j-1},\mathbf{l}_j))\\
&\leq \sum_{m_{11}\neq 1}\sum_{\mathbf{l}^b}\textsf{P}(\bigcap_{j=1}^b\tilde{\mathcal{E}}_j(m_{11},\mathbf{l}_{j-1},\mathbf{l}_j))\\
&\overset{(a)}{=} \sum_{m_{11}\neq 1}\sum_{\mathbf{l}^b}\prod_{j=1}^b\textsf{P}(\tilde{\mathcal{E}}_j(m_{11},\mathbf{l}_{j-1},\mathbf{l}_j))\\
&\leq \sum_{m_{11}\neq 1}\sum_{\mathbf{l}^b}\prod_{j=2}^b\textsf{P}(\tilde{\mathcal{E}}_j(m_{11},\mathbf{l}_{j-1},\mathbf{l}_j)),
\end{align*}
where $(a)$ follows since the channel is memoryless and the codebook is independently generated for each block $j$.

For each $\mathbf{l}^b$ and $j\in[2:b]$, define $S_j(\mathbf{l}^b)=\{k\in[1:N]:l_{k,j-1}\neq 1\}$. Note that $S_j(\mathbf{l}^b)$ depends only on $\mathbf{l}_{j-1}$ and hence we write it as $S_j(\mathbf{l}_{j-1})$. Define $X_{3,j}(S_j(\mathbf{l}_{j-1}))$ to be the set of $X_{3,kj}(l_{k,j-1}),k\in S_j(\mathbf{l}_{j-1}),$ where $l_{k,j-1}$ is the corresponding element in $\mathbf{l}^b$. Similarly define $\hat{Y}_{3,j}(S_j(\mathbf{l}_{j-1}))$ and $Y_{3,j}(S_j(\mathbf{l}_{j-1}))$. Then, for $m_{11}\neq 1$,
\begin{align*}
&(U_{1j}(1),U_{2j}(1),X_{1j}(1,m_{11}),X_{3,j}(S_j(\mathbf{l}_{j-1})),\hat{Y}_{3,j}(S_j(\mathbf{l}_{j-1})))\\
&\sim\prod_{i=1}^n P_{U_1,X_1}(u_{1,(j-1)n+i},x_{1,(j-1)n+i}) P_{U_2}(u_{2,(j-1)n+i})\times\\
&~~~~~~~~\prod_{k\in S_j}P_{X_{3,k}}(x_{3,k,(j-1)n+i})P_{\hat{Y}_{3,k}|X_{3,k}}(\hat{y}_{3,k,(j-1)n+i}|x_{3,k,(j-1)n+i})
\end{align*}
is independent of $(X_{3,j}(S_j^c(\mathbf{l}_{j-1})),\hat{Y}_{3,j}(S_j^c(\mathbf{l}_{j-1})),Y_{4j})$. By the joint typicality lemma \cite{GamalKim:11} or Lemma 2 in \cite{Lim:10}, we have
\begin{equation*}
\textsf{P}(\tilde{\mathcal{E}}_j(m_{11},\mathbf{l}_{j-1},\mathbf{l}_j))\leq 2^{-n(A(S_j(\mathbf{l}_{j-1}))-\delta(\epsilon))},
\end{equation*}
where \begin{equation*}
A(S)=I(Y_4,\hat{Y}_3(S^c);X_1,X_3(S)|U_1,U_2,X_3(S^c))+\sum_{k\in S}I(\hat{Y}_{3,k};X_1,U_2,X_3^N,Y_4,\hat{Y}_3(S^c),\hat{Y}_3^{k-1}|X_{3,k}).
\end{equation*}

Furthermore, by the definition of $S_j(\mathbf{l}_{j-1})$, if $m_{11}\neq 1$, then
\begin{align*}
\sum_{\mathbf{l}_{j-1}} 2^{-n(A(S_j(\mathbf{l}_{j-1}))-\delta(\epsilon))}&=\sum_{S\subset[1:N]}\sum_{\mathbf{l}_{j-1}:S_j(\mathbf{l}_{j-1})=S} 2^{-n(A(S_j(\mathbf{l}_{j-1}))-\delta(\epsilon))}\\
&\leq\sum_{S\subset[1:N]} 2^{-n(A(S)-\sum_{k\in S} \hat{R}_{3,k}-\delta(\epsilon))}\\
&\leq 2^N 2^{-n(\min_S(A(S)-\sum_{k\in S} \hat{R}_{3,k}-\delta(\epsilon)))},
\end{align*}
where the minimum is over $S\subset[1:N]$. Hence,
\begin{align}
\textsf{P}(\mathcal{E}_4)&\leq\sum_{m_{11}\neq 1}\sum_{\mathbf{l}^b}\prod_{j=2}^b\textsf{P}(\tilde{\mathcal{E}}_j(m_{11},\mathbf{l}_{j-1},\mathbf{l}_j))\nonumber\\
&=\sum_{m_{11}\neq 1}\sum_{\mathbf{l}_b}\sum_{\mathbf{l}^{b-1}}\prod_{j=2}^b\textsf{P}(\tilde{\mathcal{E}}_j(m_{11},\mathbf{l}_{j-1},\mathbf{l}_j))\nonumber\\
&\leq\sum_{m_{11}\neq 1}\sum_{\mathbf{l}_b}\sum_{\mathbf{l}^{b-1}}\prod_{j=2}^b 2^{-n(A(S_j(\mathbf{l}_{j-1}))-\delta(\epsilon))}\nonumber\\
&=\sum_{m_{11}\neq 1}\sum_{\mathbf{l}_b}\prod_{j=2}^b\left(\sum_{\mathbf{l}_{j-1}} 2^{-n(A(S_j(\mathbf{l}_{j-1}))-\delta(\epsilon))}\right)\nonumber\\
&\leq 2^{nbR_{11}}2^{\sum_{k=1}^N n\hat{R}_{3,k}}2^{N(b-1)}\cdot 2^{n\left(-(b-1)\min_S(A(S)-\sum_{k\in S} \hat{R}_{3,k}-\delta(\epsilon))\right)},
\end{align}
where the minimum in $(1)$ is over all $S\subset[1:N]$. Then, $(1)$ tends to zero as $n\rightarrow\infty$ if
\begin{align*}
R_{11}<&\frac{b-1}{b}\left(\min_S\left(A(S)-\sum_{k\in S} \hat{R}_{3,k}\right)-\delta^{\prime}(\epsilon)\right)-\frac{1}{b}\sum_{k=1}^N \hat{R}_{3,k}
\end{align*}
for all $S\subset[1:N]$. By eliminating $\hat{R}_{3,k}>I(\hat{Y}_{3,k};Y_{3,k}|X_{3,k})+\delta(\epsilon^{\prime})$ and taking $b\rightarrow\infty$, it can be readily shown that $\textsf{P}(\mathcal{E}_4)$ tends to zero as $n\rightarrow\infty$ if
\begin{align*}
R_{11}<&\min_{S} A(S)-\sum_{k\in S} I(\hat{Y}_{3,k};Y_{3,k}|X_{3,k})-\delta^{\prime}(\epsilon)-N\delta(\epsilon^{\prime})\\
=&\min_{S}\{I(Y_4,\hat{Y}_3(S^c);X_1,X_3(S)|U_1,U_2,X_3(S^c))\\
&-\sum_{k\in S}I(\hat{Y}_{3,k};Y_{3,k}|X_1,U_2,X_3^N,\hat{Y}_3(S^c),Y_4,\hat{Y}_3^{k-1})\}-\delta^{\prime}(\epsilon)-N\delta(\epsilon^{\prime})\\
=&\min_{S}\{I(Y_4,\hat{Y}_3(S^c);X_1,X_3(S)|U_1,U_2,X_3(S^c))\\
&-\sum_{k\in S}I(\hat{Y}_{3,k};Y_{3}(S)|X_1,U_2,X_3^N,\hat{Y}_3(S^c),Y_4,\hat{Y}_3^{k-1})\}-\delta^{\prime}(\epsilon)-N\delta(\epsilon^{\prime})\\
=&\min_{S}\{I(Y_4,\hat{Y}_3(S^c);X_1,X_3(S)|U_1,U_2,X_3(S^c))\\
&-I(\hat{Y}_{3}(S);Y_{3}(S)|X_1,U_2,X_3^N,\hat{Y}_3(S^c),Y_4)\}-\delta^{\prime}(\epsilon)-N\delta(\epsilon^{\prime})
\end{align*}
for all $S\subset[1:N]$.

Similarly, $\textsf{P}(\mathcal{E}_5),\textsf{P}(\mathcal{E}_6)$, and $\textsf{P}(\mathcal{E}_7)$ tend to zero as $n\rightarrow \infty$ if the following conditions are satisfied respectively:\\
\begin{align*}
R_{11}+R_{10}<\min_S &I(X_1,X_3(S);\hat{Y}_3(S^c),Y_4|U_2,X_3(S^c))\\
&-I(\hat{Y}_3(S);Y_3(S)|X_1,U_2,X_3^N,\hat{Y}_3(S^c),Y_4)-\delta^{\prime}(\epsilon)-N\delta(\epsilon^{\prime})\\
R_{11}+R_{20}<\min_S &I(X_1,U_2,X_3(S);\hat{Y}_3(S^c),Y_4|U_1,X_3(S^c))\\
&-I(\hat{Y}_3(S);Y_3(S)|X_1,U_2,X_3^N,\hat{Y}_3(S^c),Y_4)-\delta^{\prime}(\epsilon)-N\delta(\epsilon^{\prime})\\
R_{11}+R_{10}+R_{20}<\min_S &I(X_1,U_2,X_3(S);\hat{Y}_3(S^c),Y_4|X_3(S^c))\\
&-I(\hat{Y}_3(S);Y_3(S)|X_1,U_2,X_3^N,\hat{Y}_3(S^c),Y_4)-\delta^{\prime}(\epsilon)-N\delta(\epsilon^{\prime}).
\end{align*}
Similarly, $\textsf{P}(\mathcal{E}_8),\textsf{P}(\mathcal{E}_9)$, $\textsf{P}(\mathcal{E}_{10})$, and $\textsf{P}(\mathcal{E}_{11})$ are bounded. Finally, we obtain all inequalities in Theorem \ref{thm:mib} by substituting $R_{11}=R_1-R_{10}$ and $R_{22}=R_2-R_{20}$, and applying the Fourier-Motzkin elimination.\\ \endproof

\subsection{Converse proof of Theorem \ref{thm:capacity}}\label{app:thm2}
Let $Q$ be a random variable uniformly distributed over $[1:n]$ and
independent of $(X_1^n,X_2^n,X_3^n,Y_3^n,Y_4^n,Y_5^n)$ and let
$X_1\triangleq X_{1Q}, X_2\triangleq X_{2Q}, X_3\triangleq X_{3Q},
Y_3\triangleq Y_{3Q}, Y_4\triangleq Y_{4Q}, Y_5\triangleq Y_{5Q},
T_1\triangleq T_{1Q}$, and $T_2\triangleq T_{2Q}$. The first term in the minimum in (2) in Theorem \ref{thm:capacity} is obtained as follows.
\begin{align*}
nR_1\leq &I(M_1;Y_4^n)+n\epsilon_n\\
\leq &I(M_1;Y_4^n,Y_3^n)+n\epsilon_n\\
\leq &I(X_1^n;Y_4^n,Y_3^n)+n\epsilon_n\\
\overset{(a)}{\leq} &I(X_1^n;Y_4^n,Y_3^n|T_2^n)+n\epsilon_n\\
= &H(Y_4^n,Y_3^n|T_2^n)+n\epsilon_n\\
=&\sum_{i=1}^n H(Y_{4i},Y_{3i}|Y_3^{i-1},Y_4^{i-1},T_2^n)+n\epsilon_n\\
\overset{(b)}{=}&\sum_{i=1}^n H(Y_{4i},Y_{3i}|Y_3^{i-1},Y_4^{i-1},X_{3i},T_2^n)+n\epsilon_n\\
\leq&\sum_{i=1}^n H(Y_{4i},Y_{3i}|X_{3i},T_{2i})+n\epsilon_n\\
= &nH(Y_{4}^{\prime},Y_4^{\prime\prime},Y_{3}|X_{3},T_{2},Q)+n\epsilon_n\\
\leq &nH(Y_{4}^{\prime},Y_{3}|T_{2},Q)+n\epsilon_n\\
= &n(H(Y_4^{\prime}|T_2,Q)+H(Y_3|Y_4^{\prime},T_2,Q)+\epsilon_n)
\end{align*}
where $Q$ is the usual time-sharing random variable and $(a)$
follows by the fact that $T_2^n$ and $X_1^n$ are independent. $(b)$
follows since $X_{3i}$ is a function of $Y_3^{i-1}$. The second term in the minimum in (2) is obtained as
follows.
\begin{align*}
nR_1&\leq I(M_1;Y_4^n)+n\epsilon_n\\
&\leq I(X_1^n,X_3^n;Y_4^n|T_2^n)+n\epsilon_n\\
&= I(X_1^n;Y_4^n|T_2^n,X_3^n)+I(X_3^n;Y_4^n|T_2^n)+n\epsilon_n\\
&=H(Y_4^{\prime n}|T_2^n, X_3^n)+I(X_3^n;Y_4^{\prime\prime n}|T_2^n)+n\epsilon_n\\
&\leq n(H(Y_4^{\prime}|T_2,Q)+R_0+\epsilon_n)
\end{align*}
Similarly, we can obtain inequality (3). Before we prove the remaining terms, we define mutual information terms $I_1,\ldots,I_{12}$ and show some inequalities for those terms.
\begin{align*}
I_1=I(M_1;Y_4^n,Y_3^n,T_1^n|T_2^n\!)&\leq H(T_1^n)+H(Y_4^n,Y_3^n|T_1^n,T_2^n)\\
&\leq H(T_1^n)+\sum_{i=1}^n \{H(Y_{3i},Y_{4i}|Y_3^{i-1},Y_4^{i-1},X_{3i},T_1^n,T_2^n)\\
&\leq H(T_1^n)+\sum_{i=1}^n H(Y_{3i},Y_{4i}^{\prime}|T_{1i},T_{2i})\\
&=H(T_1^n)+nH(Y_3,Y_4^{\prime}|T_1,T_2,Q).
\end{align*}
Next, 
\begin{align*}
I_2=I(M_2;Y_3^n,Y_5^n\!)&\leq H(Y_3^n,Y_5^n)-H(Y_3^n,Y_5^n|X_2^n)\\
&\overset{(a)}{=}H(Y_3^{n},Y_5^{\prime n})-H(T_1^n)\\
&\leq nH(Y_3,Y_5^{\prime}|Q)-H(T_1^n)
\end{align*}
where $(a)$ follows by the channel conditions and the fact that $H(Y_{5}^{\prime n}|X_2^n)=H(T_1^n)$. Next,
\begin{align*}
I_3=I(M_1;Y_4^n,T_1^n|T_2^n)&\leq H(T_1^n)+I(X_1^n,X_3^n;Y_4^n|T_1^n,T_2^n)\\
&= H(T_1^n)+I(X_3^n;Y_4^{\prime n},Y_4^{\prime\prime n}|T_1^n,T_2^n)+I(X_1^n;Y_4^n|X_3^n,T_1^n,T_2^n)\\
&\leq H(T_1^n)+nR_0+nH(Y_4^{\prime}|T_1,T_2,Q)
\end{align*}
Next,
\begin{align*}
I_4=I(M_2;Y_5^n)&\leq I(X_2^n;Y_5^{\prime n})+I(X_2^n;Y_5^{\prime\prime n}|Y_5^{\prime n})\\
&\leq I(X_2^n;Y_5^{\prime n})+I(X_2^n,X_3^n;Y_5^{\prime\prime n}|Y_5^{\prime n})\\
&=H(Y_5^{\prime n})-H(T_1^n)+H(Y_5^{\prime\prime n}|Y_5^{\prime n})-H(Y_5^{\prime\prime n}|Y_5^{\prime n},X_2^n,X_3^n)\\
&\leq nH(Y_5^{\prime}|Q)-H(T_1^n)+nR_0.
\end{align*}
Similarly, $I_5,I_6,I_7,I_8$ are defined as $I_1,I_2,I_3,I_4$ with indices 1 and 2 switched. Next,
\begin{align*}
I_9=I(M_1;Y_4^n,Y_3^n,T_1^n)&\leq I(X_1^n;Y_4^n,Y_3^n,T_1^n)\\
&\overset{(a)}{=} H(T_1^n)+H(Y_3^n,Y_4^n|T_1^n)-H(T_2^n)\\
&= H(T_1^n)+\sum_{i=1}^n H(Y_{3i},Y_{4i}|Y_3^{i-1},Y_4^{i-1},T_1^n,X_{3i})-H(T_2^n)\\
&\leq H(T_1^n)+nH(Y_3,Y_4^{\prime}|T_1,Q)-H(T_2^n)
\end{align*}
where $(a)$ follows by the fact that $H(Y_{4}^{\prime n}|X_1^n)=H(T_2^n)$. Similarly,
\begin{align*}
I_{10}=I(M_2;Y_5^n,Y_3^n,T_2^n)\leq H(T_2^n)+nH(Y_3,Y_5^{\prime}|T_2,Q)-H(T_1^n).
\end{align*}
Next,
\begin{align*}
I_{11}=I(M_1;Y_4^n,T_1^n)&\leq I(X_1^n;Y_4^n,T_1^n)\\
&\leq H(T_1^n)+I(X_1^n;Y_4^{\prime n}|T_1^n)+I(X_1^n,X_3^n;Y_4^{\prime\prime n}|T_1^n,Y_4^{\prime n})\\
&=H(T_1^n)+H(Y_4^{\prime n}|T_1^n)-H(T_2^n)+H(Y_4^{\prime\prime n}|T_1^n,Y_4^{\prime n})\\
&\leq H(T_1^n)+nH(Y_4^{\prime}|T_1,Q)-H(T_2^n)+nR_0.
\end{align*}
Similarly,
\begin{align*}
I_{12}=I(M_2;Y_5^n,T_2^n)\leq H(T_2^n)+nH(Y_5^{\prime}|T_2,Q)-H(T_1^n)+nR_0.
\end{align*}
Then, the remaining terms in Theorem \ref{thm:capacity} can be proved by using above inequalities for $I_1,\ldots,I_{12}$. Inequality (4) is obtained as follows.
\begin{align*}
n(R_1+R_2)&\leq \min\{I(M_1;Y_4^n,Y_3^n,T_1^n|T_2^n\!),I(M_1;Y_4^n,T_1^n|T_2^n)\}\\
&~+\min\{I(M_2;Y_3^n,Y_5^n\!),I(M_2;Y_5^n)\}+n\epsilon_n\\
&=\min\{I_1,I_3\}+\min\{I_2,I_4\}+n\epsilon_n\\
&\leq \min\{H(T_1^n)+nH(Y_3,Y_4^{\prime}|T_1,T_2,Q),H(T_1^n)+nR_0+nH(Y_4^{\prime}|T_1,T_2,Q)\}\\
&~+\min\{nH(Y_3,Y_5^{\prime}|Q)-H(T_1^n),nH(Y_5^{\prime}|Q)-H(T_1^n)+nR_0\}+n\epsilon_n\\
&=n(\min\{H(Y_3,Y_4^{\prime}|T_1,T_2,Q),H(Y_4^{\prime}|T_1,T_2,Q)+R_0\}\\
&~+\min\{H(Y_3,Y_5^{\prime}|Q),H(Y_5^{\prime}|Q)+R_0\}+\epsilon_n).
\end{align*}
Similarly, we can obtain inequality (5) by using $I_5,I_6,I_7,I_8$. Inequality (6) is obtained as
follows.
\begin{align*}
n(R_1+R_2)&\leq \min\{I(M_1;Y_4^n,Y_3^n,T_1^n),I(M_1;Y_4^n,T_1^n)\}\\
&~+\min\{I(M_2;Y_5^n,Y_3^n,T_2^n),I(M_2;Y_5^n,T_2^n)\}+n\epsilon_n\\
&=\min\{I_9,I_{11}\}+\min\{I_{10},I_{12}\}+n\epsilon_n\\
&\leq \min\{H(T_1^n)+nH(Y_3,Y_4^{\prime}|T_1,Q)-H(T_2^n),H(T_1^n)+nH(Y_4^{\prime}|T_1,Q)-H(T_2^n)+nR_0\}\\
&~+\min\{H(T_2^n)+nH(Y_3,Y_5^{\prime}|T_2,Q)-H(T_1^n),H(T_2^n)+nH(Y_5^{\prime}|T_2,Q)-H(T_1^n)+nR_0\}\\
&~+n\epsilon_n\\
&=n(\min\{H(Y_3,Y_4^{\prime}|T_1,Q),H(Y_4^{\prime}|T_1,Q)+R_0\}\\
&~+\min\{H(Y_3,Y_5^{\prime}|T_2,Q),H(Y_5^{\prime}|T_2,Q)+R_0\}+\epsilon_n).
\end{align*}
Inequality (7) is obtained as
follows.
\begin{align*}
n(2R_1+R_2)&\leq 2I(M_1;Y_4^n)+I(M_2;Y_5^n)+n\epsilon_n\\
&\leq \min\{I(M_1;Y_4^n,Y_3^n,T_1^n|T_2^n\!),I(M_1;Y_4^n,T_1^n|T_2^n)\}\\
&~+\min\{I(M_1;Y_3^n,Y_4^n\!),I(M_1;Y_4^n)\}\\
&~+\min\{I(M_2;Y_5^n,Y_3^n,T_2^n),I(M_2;Y_5^n,T_2^n)\}+n\epsilon_n\\
&=\min\{I_1,I_3\}+\min\{I_6,I_8\}+\min\{I_{10},I_{12}\}+n\epsilon_n\\
&\leq \min\{H(T_1^n)+nH(Y_3,Y_4^{\prime}|T_1,T_2,Q),H(T_1^n)+nR_0+nH(Y_4^{\prime}|T_1,T_2,Q)\}\\
&~+\min\{nH(Y_3,Y_4^{\prime}|Q)-H(T_2^n),nH(Y_4^{\prime}|Q)-H(T_2^n)+nR_0\}\\
&~+\min\{H(T_2^n)+nH(Y_3,Y_5^{\prime}|T_2,Q)-H(T_1^n),H(T_2^n)+nH(Y_5^{\prime}|T_2,Q)-H(T_1^n)+nR_0\}\\
&~+n\epsilon_n\\
&=n(\min\{H(Y_3,Y_4^{\prime}|T_1,T_2,Q),H(Y_4^{\prime}|T_1,T_2,Q)+R_0\}\\
&~+\min\{H(Y_3,Y_4^{\prime}|Q),H(Y_4^{\prime}|Q)+R_0\}\\
&~+\min\{H(Y_3,Y_5^{\prime}|T_2,Q),H(Y_5^{\prime}|T_2,Q)+R_0\}+\epsilon_n).
\end{align*}
Similarly, we can obtain inequality (8). This completes the converse proof
for Theorem \ref{thm:capacity}. \endproof
}

\bibliographystyle{IEEEtran}
\bibliography{IEEEabrv,References}

\begin{thebibliography}{10}
\providecommand{\url}[1]{#1}
\csname url@samestyle\endcsname
\providecommand{\newblock}{\relax}
\providecommand{\bibinfo}[2]{#2}
\providecommand{\BIBentrySTDinterwordspacing}{\spaceskip=0pt\relax}
\providecommand{\BIBentryALTinterwordstretchfactor}{4}
\providecommand{\BIBentryALTinterwordspacing}{\spaceskip=\fontdimen2\font plus
\BIBentryALTinterwordstretchfactor\fontdimen3\font minus
  \fontdimen4\font\relax}
\providecommand{\BIBforeignlanguage}[2]{{%
\expandafter\ifx\csname l@#1\endcsname\relax
\typeout{** WARNING: IEEEtran.bst: No hyphenation pattern has been}%
\typeout{** loaded for the language `#1'. Using the pattern for}%
\typeout{** the default language instead.}%
\else
\language=\csname l@#1\endcsname
\fi
#2}}
\providecommand{\BIBdecl}{\relax}
\BIBdecl

\bibitem{Ahlswede:74}
R.~Ahlswede, ``The capacity region of a channel with two senders and two
  receivers,'' \emph{Ann. Prob.}, vol.~2, pp. 805--814, 1974.

\bibitem{Meulen:71}
E.~C. van~der Meulen, ``Three-terminal communication channels,'' \emph{Adv.
  Appl. Prob.}, vol.~3, pp. 120--154, 1971.

\bibitem{HanKobayashi:81}
T.~S. Han and K.~Kobayashi, ``A new achievable rate region for the interference
  channel,'' \emph{{IEEE} Trans. Inf. Theory}, vol.~27, pp. 49--60, Jan. 1981.

\bibitem{GamalCosta:82}
A.~{El Gamal} and M.~Costa, ``The capacity region of a class of deterministic
  interference channels,'' \emph{{IEEE} Trans. Inf. Theory}, vol.~28, pp.
  343--346, Mar. 1982.

\bibitem{Gamal:79}
T.~M. Cover and A.~{El Gamal}, ``Capacity theorems for the relay channel,''
  \emph{{IEEE} Trans. Inf. Theory}, vol.~25, pp. 572--584, Sep. 1979.

\bibitem{Kim:08}
Y.-H. Kim, ``Capacity of a class of deterministic relay channels,''
  \emph{{IEEE} Trans. Inf. Theory}, vol.~53, pp. 1328--1329, Mar. 2008.

\bibitem{Yu:09}
M.~Aleksic, P.~Razaghi, and W.~Yu, ``Capacity of a class of modulo-sum relay
  channels,'' \emph{{IEEE} Trans. Inf. Theory}, vol.~55, pp. 921--930, Mar.
  2009.

\bibitem{Lim:10}
S.~H. Lim, Y.-H. Kim, A.~{El Gamal}, and S.-Y. Chung, ``Noisy network coding,''
  \emph{{IEEE} Trans. Inf. Theory}, vol.~57, pp. 3132--3152, May 2011.

\bibitem{Yu:arxiv11}
P.~Razaghi, S.~N. Hong, L.~Zhou, W.~Yu, and G.~Caire, ``Two birds and one
  stone: {Gaussian} interference channel with a shared out-of-band relay,''
  \emph{Submitted to IEEE Trans. Inf. Theory}, 2011. [Online]. Available:
  http://arxiv.org/abs/1104.0430.

\bibitem{GamalKim:11}
A.~{El Gamal} and Y.-H. Kim, \emph{Network Information Theory}.\hskip 1em plus
  0.5em minus 0.4em\relax Cambridge, U.K.: Cambridge Univ. Press, 2011.

\end{thebibliography}

\end{document}